\newcommand{\lb}{\label}
\newcommand{\bea}{\begin{eqnarray}}
\newcommand{\eea}{\end{eqnarray}}
\newcommand{\lp}[2]{\Vert \, #1 \, \Vert_{#2}}
\renewcommand{\tocsubsection}[3]{%
  \indentlabel{\@ifnotempty{#2}{\hspace*{2.3em}\makebox[2.3em][l]{%
    \ignorespaces#1 #2.\hfill}}}#3}
\renewcommand{\tocsubsubsection}[3]{%
  \indentlabel{\@ifnotempty{#2}{\hspace*{4.6em}\makebox[3em][l]{%
    \ignorespaces#1 #2.\hfill}}}#3}
\newcounter{mnotecount}[section]
\renewcommand{\themnotecount}{\thesection.\arabic{mnotecount}}
\newcommand{\mnote}[1]
{\protect{\stepcounter{mnotecount}}$^{\mbox{\footnotesize $%
\!\!\!\!\!\!\,\bullet$\themnotecount}}$ \marginpar{
\raggedright\tiny\em $\!\!\!\!\!\!\,\bullet$\themnotecount: #1} }
\newcommand{\metric}{\mathbf{g}}
\newcommand{\nN}{\mathcal{N}}
\newcommand{\bbR}{\mathbb{R}}
\newcommand{\Lx}{L^2_{\sigma}}
\theoremstyle{plain}
\newtheorem{theorem}{Theorem}[section]
\newtheorem{lemma}[theorem]{Lemma}
\theoremstyle{definition}
\newtheorem{remark}[theorem]{Remark}
\numberwithin{equation}{section}
\begin{document}

\begin{center}


\title[]{Semilinear wave equations on accelerated expanding FLRW spacetimes}

\author{}

\address{}

\email{}

\date{}

\parskip = 0 pt

\maketitle

\bigskip
João L.~ Costa \footnote{e-mail address: jlca@iscte-iul.pt}{${}^{,\sharp,\star}$},
Anne T.~ Franzen \footnote{e-mail address: anne.franzen@tecnico.ulisboa.pt}{${}^{,\star}$}, and 
Jes\'{u}s Oliver \footnote{e-mail address: jesus.oliver@csueastbay.edu}{${}^{,\dagger}$}

\bigskip
{\it {${}^\sharp$}Departamento de Matemática,}\\
{\it Instituto Universit\'ario de Lisboa (ISCTE-IUL),}\\
{\it Av. das For\c{c}as Armadas, 1649-026 Lisboa, Portugal}

\bigskip
{\it {${}^\star$}Center for Mathematical Analysis, Geometry and Dynamical Systems,}\\
{\it Instituto Superior T\'ecnico, Universidade de Lisboa,}\\
{\it Av. Rovisco Pais, 1049-001 Lisboa, Portugal}

\bigskip
{\it {${}^\dagger$} California State University East Bay,}\\
{\it 25800 Carlos Bee Boulevard,}\\
{\it Hayward, California, USA, 94542}

\end{center}\begin{abstract} We identify a large class of systems of semilinear wave equations, on fixed accelerated expanding FLRW spacetimes, with nearly flat spatial slices, for which we prove small data future global well-posedness. The family of systems we consider is large in the sense that, among other examples, it includes general wave maps, as well as natural generalizations of some of Fritz John's ``blow up'' equations (whose future blow up disappears, in our setting, as a consequence of the spacetime expansion). 
We also establish decay upper bounds, which are sharp within the family of systems under analysis.  
\end{abstract}

\keywords{wave equation; decay; accelerated expansion, cosmic no-hair.}

\section{Introduction}

It is well known that an accelerated expansion provides a mechanism that helps explain the high homogeneity and isotropy of the observed Universe~\cite{ringstromBook}. At the level of wave equations, on fixed accelerated expanding cosmologies, such process of attenuation of perturbations provides a favorable environment to establish future global  existence results closely related to ``fast'' decay estimates of some relevant quantities.     

In this paper we realize these expectations by identifying a ``large'' class of Cauchy problems for systems of semilinear wave equations
\begin{equation}
\label{systemMainv0}
\begin{cases}
\square_{\metric}\phi^A=a^{-2+\delta_{0\alpha}+\delta_{0\beta}}\nN^{A,\alpha \beta}_{BC}(\phi) \partial_{\alpha} \phi^B \partial_{\beta} \phi^C\;, \\
\phi^{A}(t_0,x)=\phi^{A}_0(x) \;\;, \; \partial_t\phi^{A}(t_0,x)=\phi^{A}_1(x) \;,
\end{cases}
\end{equation}
 on fixed accelerated expanding FLRW spacetimes with metric of the form
\begin{equation}
\label{metricFLRW}
    {\metric} := -dt^2 + a^2(t)\sigma_{ij}(x)dx^idx^j\;,  
\end{equation}
for which we prove small data future global well-posedness. We also establish decay upper bounds, which are sharp within the family of systems under consideration. We use the adjective ``large'' since~\eqref{systemMainv0}  includes: $i)$  general wave maps, which for small data and under the assumption of~{\em uniformly bounded geometry} of the target manifold (see Remark~\ref{rmkExs}) satisfy
\begin{equation}
\label{wave}
\square_{\metric}\phi^A=-{\metric}^{\alpha \beta}\Gamma^{A}_{BC}(\phi) \partial_{\alpha} \phi^B \partial_{\beta} \phi^C\;,
\end{equation}
where $\Gamma^{A}_{BC}$ are the Christoffel symbols of the target manifold's Riemannian metric;
$ii)$ but also includes other examples of equations that do not exhibit any particular form of null structure. A noteworthy example of the later corresponds to Fritz John's equation~\cite{john_blow}
\begin{equation}
\square_{\metric}\phi = (\partial_t\phi)^2\;.
\end{equation}
 Recall that in 1+3 dimensional Minkowski spacetime, i.e., if ${\metric}=\eta$, where $\eta$ is the flat metric, the semilinear term in John's equation is responsible for finite time blow-up of solutions arising from arbitrarily small, but non-trivial, smooth and compactly supported initial data. As we will see, as a consequence of our results, this is no longer the case if ${\metric}$ corresponds to the metric of an accelerated expanding FLRW cosmology, with nearly flat spatial slices.

Returning to wave maps, it is of interest to note that, on par with Einstein's equations, they arguably correspond to the class of geometric wave equations that have triggered the biggest developments on the geometric analysis of evolution equations. Most of the work on the field~\cite[Chapter 6]{taoDispersive} as been carried out in the context of Minkowski and perturbations thereof (as base manifolds~\footnote{The reader might find the lack of reference to the target manifold's topology strange, but please note that this is simply a manifestation of the fact that we are only considering small data problems. See Remark~\ref{rmkExs} below for more information.}) and a typical motivation for the study of such maps comes from cosmology~\cite{ringstromWaveMap,narita,cotsakis}.  

The original motivation for our work was to identify classes of nonlinear wave equations, with relevant content in cosmological modeling, exhibiting future small data global existence. Wave maps were therefore a natural starting point. However, it rapidly became clear that our techniques applied to a wider class of wave equations and, therefore, focusing only on wave maps would be an artificial restriction that would obscure the mechanisms for decay and global existence, in accelerated expanding cosmologies. The end result of our research was the identification of a nonlinear structure~\eqref{nonLinForm} that takes advantage of the knowledge gained concerning the decay rate of derivatives of solutions to the linear homogeneous wave equations in FLRW~\cite{jpj} to create a favorable setup in the semilinear setting. Recall that time derivatives of the linear solutions decay with a rate dictated by the expansion factor $a(t)$, while spatial derivatives are at best bounded and, in general, do not decay at all (see the next section for more information). The structure~\eqref{nonLinForm} is then designed to make sure that any badly decaying derivative is multiplied by a ``good'' derivative and/or by inverse powers of the expanding factor $a(t)$. This is akin to the role played by the celebrated null structure in Minkowski, discovered by Klainerman~\cite{Knull}. 
Nonetheless, although similar in spirit the direct generalization of the null structure to the FLWR setting~\footnote{By which we mean the structure obtained by replacing, in the original null structure, the flat metric $\eta$ by the FLRW metric ${\metric}$.} and the nonlinear structure~\eqref{nonLinForm} identified in this paper are quite different both in form and content; some relevant similarities and distinctions have already been presented in the examples discussed above.    

\subsection{Some basic lessons from previous works}
\label{secLessons}
In this paper we will be concerned with accelerated expanding FLRW cosmologies (see Section~\ref{setup} for more details) with spacetime topology ${\mathcal M}=\{(t,x)\,|\, t\in\bbR^+\,,\, x\in \bbR^n\}$ and we will assume that expansion occurs in the direction of increasing $t$, to which we will refer as the future direction.
Two causal/geometric consequences of the accelerated expansion that are particularly relevant to our work are the following:
\begin{itemize}
\item {\em Global in time information from local in space data}: given a fixed $x_0\in\bbR^n$, let $\gamma(t)=(t, x_0)\in\bbR^+\times\bbR^n$  be an observer that ``reaches infinity'' and let ${\mathcal D}$ be its domain of dependence. Then, ${\mathcal D}\cap \{t\geq T\}$ is compact, for all $T>0$, see Figure \ref{glit}.
\item {\em Cosmic  silence}: given two such curves $\gamma_i(t)=(t, x_i)\in\bbR^+\times\bbR^n$, $i=1,2$,  if we denote by ${\mathcal D}_i$ the corresponding domains of dependence, then, for all sufficiently big $T>0$, ${\mathcal D}_1\cap{\mathcal D}_2\cap \{t\geq T\}=\emptyset$, see Figure \ref{cosil}.
\end{itemize}
{\begin{figure}[!ht]
\centering
\includegraphics[width=0.8\textwidth]{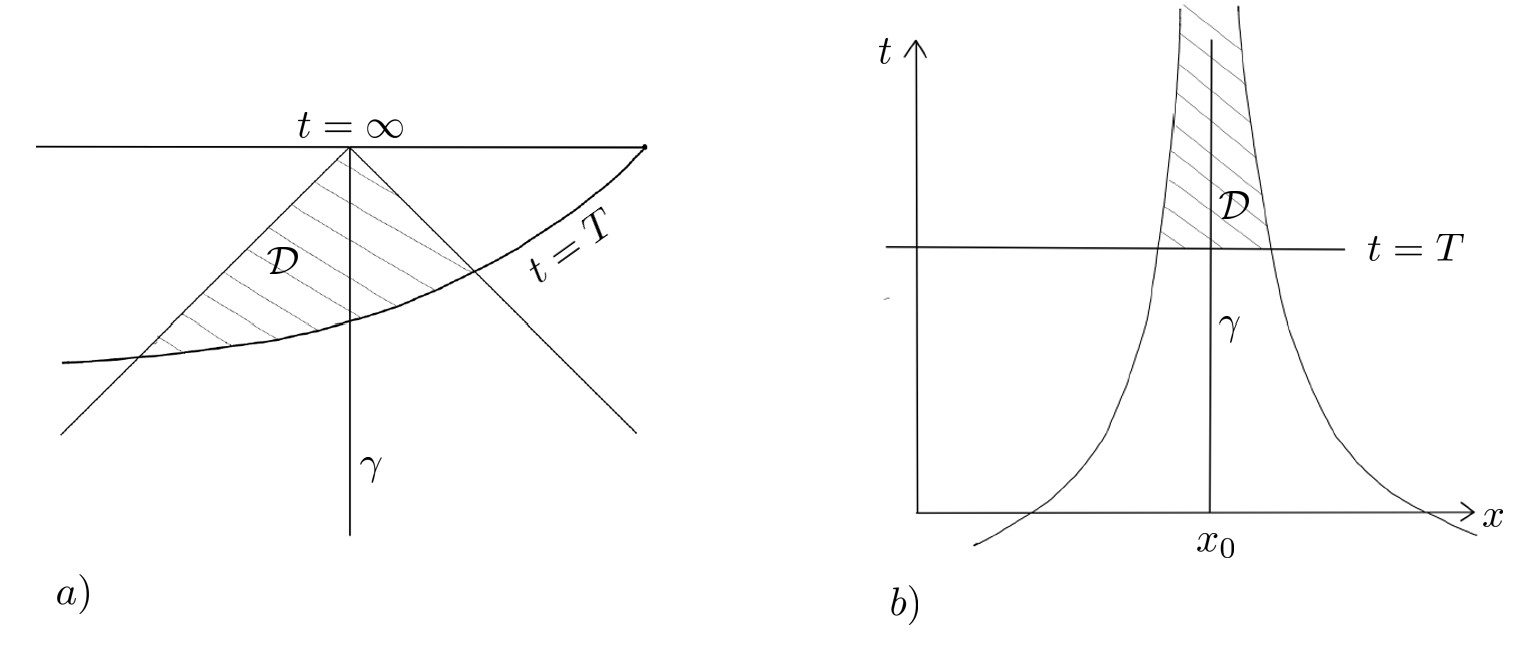}
\caption[]{a) Section of Penrose diagram with ${\mathcal D}\cap \{t\geq T\}$ depicted as the hatched region. b) 2-dimensional representation in $\bbR^+\times\bbR^n$ of ${\mathcal D}$.}
\label{glit}\end{figure}}
{\begin{figure}[!ht]
\centering
\includegraphics[width=0.8\textwidth]{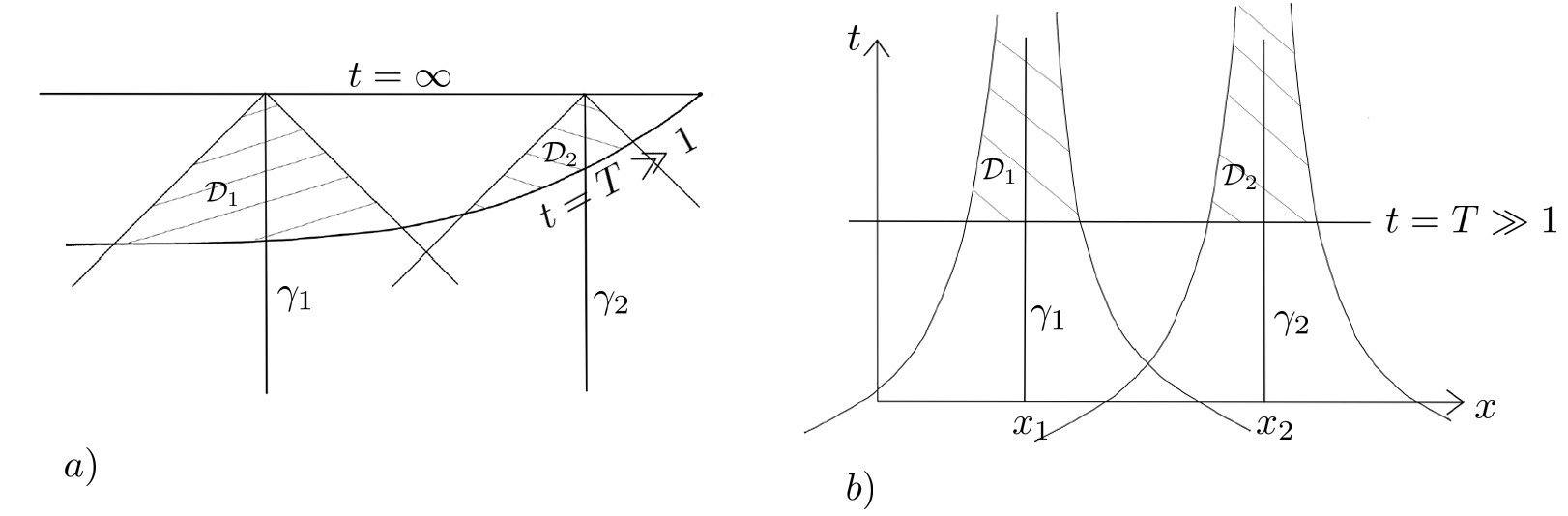}
\caption[]{a) Section of Penrose diagram with ${\mathcal D}_1\cap{\mathcal D}_2\cap \{t\geq T\}$ depicted as the hatched regions. b) 2-dimensional representation in $\bbR^+\times\bbR^n$ of ${\mathcal D_1}$ and ${\mathcal D_2}$.}
\label{cosil}\end{figure}}

As a consequence of the first property, we see that for any hyperbolic equation that, in particular, satisfies the domain of dependence property, we can obtain global in time information about its solutions, from localized initial data prescribed on a compact set of the form ${\mathcal D}\cap \{t= T\}$~\footnote{This is in stark contrast  with what happens, for instance, in Minkowski spacetime, and was used in groundbreaking work by Ringstr\"om~\cite{RingstromFuture} to establish future non-linear stability of de Sitter spacetime (a particularly relevant example of our FLRW family) as a solution of appropriate Einstein-non-linear scalar field systems.}. In particular, we can assume that our initial data is contained in a large enough torus $\mathbb T^n$. Then, if the torus is flat and we consider the homogeneous wave equation we can, in fact, derive explicit solutions using Fourier series, as done in Appendix A of~\cite{jpj}; these solutions can then be used to clarify what is the sharp asymptotic behavior of solutions.  For instance, in the case of a power law expanding factor $a(t)=t^p$, $p>1$, we will show that small data solutions in our class of semilinear wave equations satisfy the following estimates (valid in the future region)
$$|\partial_t\phi| \lesssim t^{-2p+1}$$
and 
$$|\partial_x\phi| \lesssim 1\;.$$
Since within the referred Fourier mode solutions~\cite[(155)--(157)]{jpj} there are solutions with this exact profile, with ``$\lesssim$'' replaced by ``$\sim$'', and since  the homogeneous wave equation is a particular case of our setup, we can then conclude that these estimates are sharp within our class of equations.

Now let us discuss an important consequence of {\em cosmic silence}. Let us start with the wave equation $\square_{\metric}\phi = 0$ and let us choose   $T\gg1$ such that ${\mathcal D}_1\cap{\mathcal D}_2\cap \{t\geq T\}=\emptyset$, as described before. Now consider the Cauchy problem with data posed on $t=T$ such that $\phi_{|{\mathcal D}_i\cap \{t= T\}}=C_i$ and $\partial_t\phi_{|{\mathcal D}_i\cap \{t= T\}}=0$, where the $C_i$ are distinct constants. Then, by the domain of dependence property and the fact that the homogenous wave equation admits constants as solutions, we conclude that $\phi_{|{\mathcal D}_i}(t,x)=C_i$, for all $t$. In particular, $\phi$ does not converge to a constant at infinity, instead we have $\lim_{t\rightarrow+\infty}|\phi(t,x)-\phi_{\infty}(x)|= 0$, for some (non-constant) function $x\mapsto\phi_{\infty}(x)$.  We can now easily see that the exact same conclusions apply to any (non-linear) wave equation that admits constants as solutions; this is exactly what happens with our class of semilinear wave equations (see~\eqref{decay2function1} and~\eqref{decay2function2}). Moreover this should be contrasted with what happens if we consider the Klein-Gordon case $\square_{\metric}\phi = m^2 \phi$ with non-zero mass $m$: then the only constant solution is the trivial one and the remaining solutions, arising from appropriate initial data, decay to zero at future infinity (see for instance~\cite{amol}).

\subsection{Other related works}

The mathematical analysis of wave equations on expanding cosmological spacetimes has a long and rich history that can be traced back to the work of Klainerman and Sarnak~\cite{klainerman_sarnak}. Here we will not try to give a complete overview of the subject and will instead simply focus on previous works that are concerned with the analysis of such PDEs in fixed {\em accelerated} expanding FLRW cosmologies.   

Sharp and almost sharp decay estimates for linear wave equations in accelerated expanding FLRW spacetimes, with special emphasis on de Sitter,  can be found in~\cite{ren,vasy,baskin,jpj,amol,rossetti}.
For a detailed presentation of systems of linear wave equations on various cosmological backgrounds we refer to the monograph~\cite{hans_big} of Ringström.

In~\cite{choquet} Choquet-Bruhat investigated wave maps with FLRW base space and established global existence under appropriate smallness conditions on the data and the spatial geometry.  There is some overlap between these results and the existence results of our paper in what pertains to wave maps. Nonetheless we should mention that Choquet-Bruhat's strategy is more geometric in nature and is restricted to wave maps with $1+n$ dimensional base spaces and $n\leq 3$; moreover her work does not provide an asymptotic analysis of the maps. 

A thorough study of linear and semilinear wave equations, using representation formulas via integral transforms, has been performed by Galstian and Yagdjian (see~\cite{GY,Y,Yag_surv}  and references therein). These works consider  non-linearities depending only on $\phi$  and are restricted to the Klein-Gordon case with non-vanishing mass term; this last fact reveals itself in the fact that in their case $\phi\rightarrow 0$, as $t\rightarrow 0$ (see discussion in the end of Section~\ref{secLessons}). Results along the same line have been also obtained by Ebert and Reissig~\cite{Ebert}.

\subsection{Overview}
In Section~\ref{setup} we present our  geometrical setup, the structure of our systems of semilinear equations and our main results. In Section~\ref{secEnergy} we establish our basic energy estimate using the vector field method. 
The issue of local existence is settled in Section~\ref{secLocal} where we use the conformal method to transform our equations into equations in Minkowski, where we can invoke classical local existence 
 results. The proof of small data global existence is carried out in Section~\ref{secGlobal}. To this end we use a bootstrap argument that extends the methods of proof developed in~\cite{jesus1} and~\cite{jesus2}: in a nutshell, the nonlinear structure~\eqref{nonLinForm} takes advantage of the integrability of $1/a(t)$ in order to achieve balanced commutator estimates for derivatives in $L^2$ and $L^{\infty}$; this allows us to close the bootstrap. Finally, in Section~\ref{secDecay} we establish the sharp future decay estimates. The simple proof presented here is a variation on an idea by Pedro Girão~\cite{girao} to deal with the de Sitter case, which was previously implemented in~\cite{amol}. Here we present a streamlined and extended version of this strategy; streamlined by avoiding the need to use conformal time and extended to semilinear equations and the entire FLRW family under consideration.   

\section{Setup and Main Results}
\label{setup}

Let $(\mathcal{M},{\mathbf{g}})$
be a Friedman-Lemaitre-Robertson-Walker (FLRW) spacetime
with topology $\mathbb{R}^+\times\mathbb{R}^n$ and metric
\begin{equation}
\label{metricFLRW}
    {\metric} := -dt^2 + a^2(t)\sigma_{ij}dx^idx^j\;,  
\end{equation}
where $\sigma_{ij}:=\sigma_{ij}(x)$ are the components of a Riemannian metric in $\mathbb{R}^n$.  We will consider cosmologies undergoing an accelerated expansion in the direction of positive time $t$: expansion corresponds to 
\begin{equation}
 \dot a:=\partial_t a>0\;,
\end{equation}
 and the accelerated character of this expansion can by codified by imposing the integrability condition 
\begin{equation}
\label{intCond}
 \int_{t_0}^{\infty}  \frac{1}{a(s)} ds<\infty\;, 
\end{equation}
where, from now on,  $t_0>0$ is fixed. We also assume that $a(t)>0$ for all $t\geq t_0$.

Consider the covariant wave operator, defined by 
\begin{equation}
\square_{\metric}\phi = \frac{1}{\sqrt{|\metric|}}\partial_{\alpha}\left(\metric^{\alpha\beta} \sqrt{|\metric|} \partial_{\beta}\phi\right ) \;,
\end{equation}
where $|\metric|=-\det(\metric_{\alpha\beta})$, $\metric^{\alpha\beta}$ are the components of the inverse of $\metric_{\alpha\beta}$, and where, as usual, greek indices run from $0$ to $n$. For the FLRW metric~\eqref{metricFLRW},we have
\bea
\lb{flrw_eq}
\square_{\metric}\phi=-\partial_t^2 \phi-n\frac{\partial_t {a}}{a}\partial_t \phi+ \frac{1}{a^2}\Delta_{\sigma} \phi\;,
\eea
where $\Delta_{\sigma}$ is the Laplace operator of the metric $\sigma$, defined by 
\begin{equation}
\label{DeltaInComp}
\Delta_{\sigma}\phi = \frac{1}{\sqrt{|\sigma|}}\partial_{i}\left(\sigma^{ij} \sqrt{|\sigma |} \partial_{j}\phi\right ) \;,
\end{equation}
for $|\sigma|=\det(\sigma_{ij})$, and where $\sigma^{ij}$ are the components of the inverse of $\sigma_{ij} $,  with latin indices taking values in the range $1$ to $n$.

Let $1\leq A,B,C\leq d$ and $\phi^A:\mathbb{R}^{1+n}\rightarrow \mathbb{R}$. In this work we study solutions
to the Cauchy problem for systems of
semilinear wave equations of the form 
\begin{equation}
\begin{cases}
\square_{\metric}\phi^A=\tilde \nN^{A,\alpha \beta}_{BC}(\phi)\partial_{\alpha} \phi^B \partial_{\beta} \phi^C\;, \\
\phi^{A}(t_0,x)=\phi^{A}_0(x) \;\;, \; \partial_t\phi^{A}(t_0,x)=\phi^{A}_1(x) \;,
\end{cases}
\end{equation}
for $(\phi^{A}_0,\phi^{A}_1)\in H^{K+1}(\mathbb{R}^{n})\times H^{K}(\mathbb{R}^{n})$, $K\geq  n +1$,
where the nonlinearities take the form 
    \begin{align} \label{nonLinForm0}
	&\tilde \nN^{A,0 0}_{BC}=\nN^{A,0 0}_{BC}\;,
          \\
        &\tilde \nN^{A,0 j}_{BC}=a^{-1}(t)\,\nN^{A,0 j}_{BC}\;,
        \\  
        &\tilde \nN^{A,i 0}_{BC}=a^{-1}(t)\,\nN^{A,i 0}_{BC}\;,
        \\  
        &\tilde \nN^{A,i j}_{BC}=a^{-2}(t)\,\nN^{A,i j}_{BC}\;,
	\end{align}
with $\nN^{A,\alpha \beta}_{BC}\in C_{b}^{\infty}(\mathbb{R}^{d})$, i.e. the functions $\nN$ have uniformly bounded derivatives of all orders.  Note that by using the Kronecker symbol we can compress the form of the nonlinearities to a single expression by writing 
    \begin{align} \label{nonLinForm}
	&\tilde \nN^{A,\alpha \beta}_{BC}=a^{-2+\delta_{0\alpha}+\delta_{0\beta}}\nN^{A,\alpha \beta}_{BC}\;.
	\end{align}

The main results of our paper are compiled in

\begin{theorem}
\label{main}
Let $K\geq n+1$, where $n\geq 2$ is the spatial dimension of a FLRW spacetime, with topology $\mathbb{R}^{+}\times\mathbb{R}^n$ and smooth metric of the form~\eqref{metricFLRW}, whose spatial geometry satisfies
\begin{equation}
\label{smallSigmaK}
\sum_{i,j=1}^n \left(\lp{\sigma_{ij}-\delta_{ij}}{L^{\infty}(\bbR^n)}  +\sum_{k=1}^{K+1} \lp{\partial_x^k\sigma_{ij}}{L^{\infty}(\bbR^n)}\right)  =: C_{\sigma}<\infty \;.
\end{equation}
 Consider initial data  $\phi^{A}_0,\phi^{A}_1:\bbR^n\rightarrow\mathbb R$, $1\leq A\leq d$,  such that, for a fixed $K\geq n+1$,
\begin{equation}
\label{smallInitial data}
\sum_{1\leq A\leq d}\left(\lp{\phi^{A}_0}{H^{K+1}(\bbR^n)}  +\lp{\phi^{A}_1}{H^{K}(\bbR^n)}\right)=:C_0<\infty\;.
\end{equation}
Then, given $t_0>0$, there exists $\delta_0>0$, such that, if $C_{\sigma}+C_0\leq \delta_0$, the initial value problem 
\begin{equation}
\label{systemMain0}
\begin{cases}
\square_{\metric}\phi^A=a^{-2+\delta_{0\alpha}+\delta_{0\beta}}\nN^{A,\alpha \beta}_{BC}(\phi) \partial_{\alpha} \phi^B \partial_{\beta} \phi^C\;, \\
\displaystyle
\phi^{A}(t_0,x)=\phi^{A}_0(x) \;\;, \; \partial_t\phi^{A}(t_0,x)=\phi^{A}_1(x) \;,
\end{cases}
\end{equation}
with $\nN^{A,\alpha \beta}_{BC}\in C_{b}^{\infty}(\mathbb{R}^{d})$, admits a unique solution 
$(\phi^{A},\partial_t\phi^{A})\in L^{\infty}([t_0,T), H^{K+1}(\bbR^n))\times L^{\infty}([t_0,T), H^{K}(\bbR^n)) $\;.

Concerning the asymptotic behavior of the solutions, given a fixed $0\leq k<K-\frac{n}{2}$, we highlight that:  
\begin{enumerate}
\item 
For a general expanding factor we have
\bea 
\label{mainDecayEst}
\lp{\partial_t\partial_x^k\phi^{A} (t,\,\cdot\,)}{L^{\infty}}  \lesssim C_0 \left( \int_{t_0}^t  a^{n-2}(s) ds \right) a^{-n}(t) \;.
\eea
\begin{enumerate}
\item in the case of a  power law expansion $a(t)=t^p$, $p>1$,~\eqref{mainDecayEst} becomes 
\bea 
\lp{\partial_t\partial_x^k\phi^{A} (t,\,\cdot\,)}{L^{\infty}}\lesssim C_0 t^{-2p+1}\;,
\eea
\item and in the de Sitter case $a(t)=e^{Ht}$, $H>0$,~\eqref{mainDecayEst} reads 
\bea 
\lp{\partial_t\partial_x^k\phi^{A} (t,\,\cdot\,)}{L^{\infty}}\lesssim C_0 e^{-2Ht}\;.
\eea
\end{enumerate}
\item Moreover, for a general expanding factor there exists a function $\phi_{\infty}=(\phi_{\infty}^A):\bbR^n\rightarrow \bbR^d$, such that  we have
\begin{equation}
\lp{\partial_x^k\left(\phi^{A}(t,\,\cdot\,)-\phi^{A}_{\infty}\right)}{L^{\infty}}\rightarrow 0\;,
\label{limit_function}
\end{equation}
as $t\rightarrow \infty$.

\begin{enumerate}
\item in the case of  a power law expansion $a(t)=t^p$, $p>1$, we have
\bea 
\label{decay2function1}
\lp{\partial_x^k\left(\phi^{A}(t,\,\cdot\,)-\phi^{A}_{\infty}\right)}{L^{\infty}}\lesssim C_0 t^{-2p+2}\;,
\eea
\item and in the de Sitter case $a(t)=e^{Ht}$, $H>0$, we get
\bea 
\label{decay2function2}
\lp{\partial_x^k\left(\phi^{A}(t,\,\cdot\,)-\phi^{A}_{\infty}\right)}{L^{\infty}}\lesssim C_0 e^{-2Ht}\;.
\eea
\end{enumerate}
\end{enumerate}
\end{theorem}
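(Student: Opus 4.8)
The plan is to obtain the solution and its asymptotics through a hierarchy of estimates: local well-posedness together with a continuation criterion, a bootstrap producing global existence with $L^2$ and $L^\infty$ control of the derivatives, and finally a direct integration of the equation in time that upgrades these bounds to the sharp pointwise decay \eqref{mainDecayEst} and the convergence \eqref{limit_function}. For the first step I would pass to conformal time $\tau$, defined by $d\tau=dt/a$, and rescale $\phi^A$ so that $\square_{\metric}$ becomes, modulo lower-order and conformal-weight terms, the d'Alembertian of $-d\tau^2+\sigma_{ij}dx^idx^j$; since \eqref{smallSigmaK} makes this a small perturbation of Minkowski, classical $H^{K+1}\times H^{K}$ local existence and its blow-up criterion apply. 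It then suffices to prevent finite-$t$ blow-up of the $H^{K+1}\times H^K$ norm, which the global estimates below provide.

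For global existence I would run the vector field method with multiplier $\partial_t$, producing, for the derivatives $\partial_x^\beta\phi$ with $|\beta|\le K$, the energy
\[
E_K(t)=\tfrac12\sum_{|\beta|\le K}\int_{\bbR^n}\Big(a^{n}(\partial_t\partial_x^\beta\phi)^2+a^{n-2}\sigma^{ij}\partial_i\partial_x^\beta\phi\,\partial_j\partial_x^\beta\phi\Big)\sqrt{|\sigma|}\,dx\,.
\]
Its time derivative contains a favorable (damping) kinetic term $-n\frac{\dot a}{a}E_K^{\mathrm{kin}}$, a borderline potential term $+(n-2)\frac{\dot a}{a}E_K^{\mathrm{pot}}$, and commutator and nonlinear contributions. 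The decisive structural input is \eqref{nonLinForm}: since the linear theory shows that spatial derivatives do not decay while time derivatives do, \eqref{nonLinForm} is arranged so that every undifferentiated-in-time (hence non-decaying) spatial derivative is multiplied either by a good time derivative or by an inverse power of $a(t)$. After a high--low splitting of the commutators---top-order factors kept in $L^2$ via $E_K$, low-order factors placed in $L^\infty$ by Sobolev embedding, which is licensed by $K\ge n+1$---every genuinely dangerous term carries a factor $a^{-1}(t)$, integrable by \eqref{intCond}. Grönwall then closes the bootstrap for $C_\sigma+C_0\le\delta_0$ small, yielding $\lp{\partial_x^\beta\phi}{L^2}\lesssim C_0$ for $|\beta|\le K+1$, $\lp{\partial_t\partial_x^\beta\phi}{L^2}\lesssim C_0\,a^{-1}$ for $|\beta|\le K$, and the global solution. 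This closing of the bootstrap---verifying that the algebraic form \eqref{nonLinForm} really does pair every non-decaying derivative with an integrable weight or a good derivative, uniformly up to order $K$---is the step I expect to be the main obstacle.

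To reach the sharp rate \eqref{mainDecayEst} I would integrate the equation rather than use the energy. Writing $F^A:=a^{-2+\delta_{0\alpha}+\delta_{0\beta}}\nN^{A,\alpha\beta}_{BC}(\phi)\partial_\alpha\phi^B\partial_\beta\phi^C$, \eqref{flrw_eq} is equivalent to $\partial_t(a^n\partial_t\phi^A)=a^{n-2}\Delta_{\sigma}\phi^A-a^nF^A$. Commuting with $\partial_x^k$ and integrating from $t_0$ to $t$ gives
\[
\partial_t\partial_x^k\phi^A(t)=a^{-n}(t)\Big[a^{n}(t_0)\partial_x^k\phi^A_1+\int_{t_0}^t\big(a^{n-2}\partial_x^k\Delta_{\sigma}\phi^A-a^n\partial_x^k F^A+[\,\cdot\,]\big)\,ds\Big]\,,
\]
where $[\,\cdot\,]$ collects commutators controlled by $C_\sigma$. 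For $k<K-\tfrac n2$ the bootstrap and Sobolev embedding give $\lp{\partial_x^k\Delta_{\sigma}\phi^A}{L^\infty}\lesssim C_0$, while \eqref{nonLinForm} (together with the crude bound $\lp{\partial_t\partial_x^\beta\phi}{L^\infty}\lesssim C_0 a^{-1}$) forces $\lp{a^n\partial_x^kF^A}{L^\infty}\lesssim C_0^2\,a^{n-2}$, i.e. the same weight but quadratically small; taking $L^\infty$ norms yields \eqref{mainDecayEst}. Evaluating $a^{-n}\int_{t_0}^t a^{n-2}$ for $a=t^p$ and $a=e^{Ht}$ produces the two corollaries.

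For the convergence \eqref{limit_function}, note that \eqref{mainDecayEst} together with monotonicity of $a$ and \eqref{intCond} makes $s\mapsto\partial_t\partial_x^k\phi^A(s,x)$ integrable on $[t_0,\infty)$: by Fubini and the inequality $a^{-(n-1)}(t)\le a^{-(n-1)}(s)$ for $t\ge s$, one gets $\int_{t_0}^\infty a^{-n}(t)\int_{t_0}^t a^{n-2}\,ds\,dt\le\big(\int_{t_0}^\infty a^{-1}\big)^2<\infty$. Hence $\phi^A(t,\cdot)$ is Cauchy in the $\partial_x^k L^\infty$ topology and converges to some $\phi^A_\infty$; writing $\partial_x^k(\phi^A(t)-\phi^A_\infty)=-\int_t^\infty\partial_t\partial_x^k\phi^A(s)\,ds$ and inserting \eqref{mainDecayEst} gives \eqref{limit_function}, and computing the tail integral for the two model expansions gives the rates \eqref{decay2function1} and \eqref{decay2function2}.
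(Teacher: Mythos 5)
Your overall architecture is the same as the paper's: local existence by passing to conformal time so that $\metric=a^2(-d\tau^2+\sigma_{ij}dx^idx^j)$ and quoting classical theory (the paper does this in Section~\ref{secLocal}, without rescaling $\phi$ --- it simply carries the extra first-order term $(n-1)\frac{\partial_\tau a}{a}\partial_\tau\phi$ into the source); global existence by a bootstrap on energies of $\partial_x^k\phi$, with high--low Leibniz splitting, Sobolev embedding licensed by $K\geq n+1$, smallness of $C_\sigma$ for the commutator $[\square_{\metric},\partial_x^k]=a^{-2}[\Delta_\sigma,\partial_x^k]$, and Gr\"onwall against the integrable weight $a^{-1}$; sharp decay by integrating $\partial_t(a^n\partial_t\phi)=a^{n-2}(\Delta_\sigma\phi-a^2\tilde\nN^{\alpha\beta}\partial_\alpha\phi\partial_\beta\phi)$ in time; and the limit $\phi_\infty$ from tail integrability --- your Fubini bound is exactly the paper's estimate~\eqref{intCond2}. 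Steps one, three and four of your plan are essentially identical to Sections~\ref{secLocal} and~\ref{secDecay} of the paper and are sound.

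The genuine gap sits at the step you yourself flag as the main obstacle, and it is caused by your choice of multiplier. With $X=\partial_t$, the deformation-tensor contribution is the paper's~\eqref{def_tensor_explicit} with $l=0$, i.e.\ $n\,a^{-1}\dot a\,(\partial_t\phi)^2+(2-n)\,a^{-3}\dot a\,\sigma^{ij}\partial_i\phi\partial_j\phi$, which is precisely your term $+(n-2)\frac{\dot a}{a}E_K^{\mathrm{pot}}$. For $n\geq 3$ this term has the wrong sign and its coefficient is \emph{not} integrable, since $\int_{t_0}^\infty\frac{\dot a}{a}\,dt=\lim\log a=\infty$; so your key assertion that ``every genuinely dangerous term carries a factor $a^{-1}(t)$'' fails for exactly this term, and Gr\"onwall as you invoke it cannot give control of $E_K$ uniformly --- it gives at best $E_K\lesssim a^{n-2}$. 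Your stated output bounds ($\lp{\partial_x^{\beta+1}\phi}{L^2}\lesssim C_0$, $\lp{\partial_t\partial_x^\beta\phi}{L^2}\lesssim C_0a^{-1}$) are in fact the components of $E_K\lesssim C_0^2a^{n-2}$, so they are consistent with that growth, but then the bootstrap hypothesis, the Sobolev $L^\infty$ bounds, and the estimates on the nonlinear and commutator terms must all be rerun relative to a \emph{growing} energy, and you never verify that the losses and the growth profile cancel exactly; nothing in your write-up does this bookkeeping. The missing idea is the paper's Lemma in Section~\ref{secEnergy}: take $X=a^l\partial_t$ and observe that the contraction equals $(n-l)a^{l-1}\dot a(\partial_t\phi)^2+(2-n-l)a^{l-3}\dot a\,\sigma^{ij}\partial_i\phi\partial_j\phi$, so the choice $l=2-n$ annihilates the potential term identically and leaves only a damping kinetic term. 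The resulting energy~\eqref{energyDef}, $E=\frac12\int\left[a^2(\partial_t\phi)^2+\sigma^{ij}\partial_i\phi\partial_j\phi\right]|\sigma|^{1/2}dx$ (which is $a^{2-n}$ times yours), obeys the clean estimate~\eqref{energyEst} in which the only source terms are the nonlinearity and the commutator, and those two really do carry the integrable factor $a^{-1}$ once~\eqref{nonLinForm} and~\eqref{smallSigmaK} are used. Equivalently: renormalize your $E_K$ by $a^{2-n}$ \emph{before} differentiating in time. Note that for $n=2$ your multiplier coincides with the paper's and your argument goes through verbatim; the gap is real for every $n\geq 3$.
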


\vspace{1cm}

\begin{remark}
\label{rmkExs}
 The previous result applies to the following particular cases:
\begin{itemize}
\item Let $\phi$ be a wave map with base manifold one of our FLRW spacetimes $({\mathcal M}, g)$ and target manifold a given Riemannian manifold $({\mathcal N},h)$ with~{\em uniformly bounded geometry}~\cite[Chapter 6]{taoDispersive}, by which we mean that we can cover $\mathcal N$ with coordinate charts of radius bounded from below, where the Christoffel symbols of $h$, that we denote by $\Gamma^A_{BC}$, are bounded and have bounded derivatives of all orders.  Then, in the small data setting, the wave map equations take the form 
\begin{equation}
\square_{\metric}\phi^A=-{\metric}^{\alpha \beta}\Gamma^{A}_{BC}(\phi) \partial_{\alpha} \phi^B \partial_{\beta} \phi^C\;,
\end{equation}
which clearly fits into our framework. To this effect recall that $g^{\alpha \beta}$ are the components of the inverse metric which is given by
\begin{equation}
\label{metricInvFLRW}
    {\metric}^{-1} := -\partial_t\otimes\partial_t + a^{-2}(t)\sigma^{ij}\partial_{x^i}\otimes \partial_{x^j}\;,  
\end{equation}
where $\sigma^{ij}$ are the components of $\sigma^{-1}$.
 \item  Arguably  the most famous example of Fritz John's ``blow up'' equations~\cite{john_blow} is 
\begin{equation}
\label{John}
    \square_{\metric}\phi= (\partial_{t} \phi)^2\;.
\end{equation}
Recall that in 1+3 dimensional Minkowski spacetime, i.e., if $g=\eta$, where $\eta$ is the flat metric, all solutions arising from arbitrarily small, but non-trivial, smooth and compactly supported initial data, blow up in finite time. 
However, if $g$ is one of our FLRW metrics then the equation fits into our framework and as a consequence our results show that, in such case, the equation satisfies small data global existence to the future; so we see that the (small data) finite time blow up to the future disappears as a consequence of the accelerated expansion.
\item Linblad-Rodnianski's basic example of a system that does not satisfy the null condition but satisfies the weak-null condition (see \cite{lind_rod} for more information) formally generalizes to our setting to yield
\begin{equation}
\begin{cases}
 \square_{\metric}\phi_1= 0\;,
\\
    \square_{\metric}\phi_2= (\partial_{t} \phi_1)^2\;.
\end{cases}
\end{equation}
Contrary to what happens in Minkowski, in the FLRW case small data global existence for this system is not as surprising in view of the fact that we also have global existence for Fritz John's equation~\eqref{John}.  
\end{itemize}
\end{remark}

%
%

\section{Energy Formalism}
\label{secEnergy}

We define the {\it{Energy-Momentum Tensor}} to be
\begin{equation}
    T_{\alpha\beta} = \partial_\alpha \phi\, \partial_\beta \phi
    - \frac{1}{2}\metric_{\alpha\beta}
    \partial^\mu \phi\,  \partial_\mu \phi  \;.
\end{equation}

Let $D$ be the Levi-Civita connection of the metric $\metric$. The divergence of the energy momentum is then
\[D^\alpha T_{\alpha\beta}=  \partial_{\beta}\phi\square_{\metric}\phi\; .\] 
Given a (smooth) 
vector field $X$, we define the 1-form
\[{}^{(X)}P_\alpha = T_{\alpha\beta}X^\beta .\]
Taking its divergence yields
\begin{equation} \label{divergence}
    D^\alpha {}^{(X)}P_\alpha = \frac{1}{2}\;^{(X)}\pi^{\alpha\beta}T_{\alpha\beta}+X\square_{\metric}\phi\;,
\end{equation}
where 
\[^{(X)}\pi_{\alpha\beta}:= {\mathcal L}_X \metric _{\alpha\beta} = D_{\alpha}X_{\beta}+D_{\beta}X_{\alpha} \;,\] 
is a symmetric 2-tensor known as the \textit{Deformation Tensor}  
 of
$\metric$ with respect to $X$. Integrating the divergence identity~\eqref{divergence} over the time slab
\[\{(t,x) \;|\;t_0\leq t\leq t_1\}\] and using Stokes'
theorem we get the following {\it{Multiplier Identity}}
\begin{equation} \label{multident}
\begin{split}
    \int_{t=t_0}\; ^{(X)}P_\alpha N^\alpha |\metric|^{\frac{1}{2}}dx -
    \int_{t=t_1}\; ^{(X)}P_\alpha N^\alpha |\metric|^{\frac{1}{2}}dx 
    = \int_{t_0}^{t_1}\int_{\mathbb{R}^n}\left(\frac{1}{2} \;^{(X)}\pi^{\alpha\beta}T_{\alpha\beta}+
    X\phi \cdot \square_{\metric}\phi\right)|\metric|^{\frac{1}{2}}dxdt\;,
\end{split}
\end{equation}
where $N=\partial_t$ is the future pointing unit normal to the time slices $t=const$, $|\metric|=-\det(\metric_{\alpha\beta})=a^n\det(\sigma)=a^n|\sigma|$, with $|\sigma|:=\det(\sigma_{ij})$, and $dx=dx^1\cdots dx^n$. 
The integrand $^{(X)}P_\alpha N^\alpha$ in \eqref{multident} is the {\it{Energy Density}}
associated to $X$. 

We can also control the sign of the contraction of the deformation tensor with the energy-momentum  tensor, on the right hand side, by choosing an appropriate multiplier vector field $X$. In fact we have

\begin{lemma}
For $X=a^l \partial_t$, where $a$ is the expanding factor and $l\in\mathbb{R}$, we have
\begin{equation}
\;^{(X)}\pi^{\alpha\beta}T_{\alpha\beta}=(n-l)a^{l-1}\dot a \left(\partial_t \phi\right)^2 + (2-n-l)a^{l-3}\dot a \,\sigma^{ij} \partial_i\phi \partial_j \phi\;,\label{def_tensor_explicit}
\end{equation}
where $\sigma^{ij}$ is the inverse of $\sigma_{ij}$.
\end{lemma}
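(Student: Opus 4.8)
The plan is to carry out a direct coordinate computation. Since $X = a^l\partial_t$ has components $X^0 = a^l(t)$ and $X^i = 0$, with $X^0$ depending on $t$ alone, I would compute the deformation tensor through the coordinate expression of the Lie derivative,
\[
{}^{(X)}\pi_{\alpha\beta} = X^\gamma\partial_\gamma\metric_{\alpha\beta} + \metric_{\gamma\beta}\partial_\alpha X^\gamma + \metric_{\alpha\gamma}\partial_\beta X^\gamma\;,
\]
rather than through $D_\alpha X_\beta + D_\beta X_\alpha$; this avoids having to compute the Christoffel symbols of $\metric$ explicitly and is what keeps the whole argument short. Using $\metric_{00} = -1$, $\metric_{0i} = 0$, $\metric_{ij} = a^2\sigma_{ij}$ together with $\partial_i X^\gamma = 0$, the only surviving contributions come from the $t$-derivatives of $X^0$ and of $a^2$, so I expect
\[
{}^{(X)}\pi_{00} = -2l\, a^{l-1}\dot a\;, \qquad {}^{(X)}\pi_{0i} = 0\;, \qquad {}^{(X)}\pi_{ij} = 2\,a^{l+1}\dot a\,\sigma_{ij}\;.
\]

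Raising indices with $\metric^{00} = -1$ and $\metric^{ij} = a^{-2}\sigma^{ij}$ (see~\eqref{metricInvFLRW}) then yields ${}^{(X)}\pi^{00} = -2l\,a^{l-1}\dot a$, ${}^{(X)}\pi^{0i}=0$, and ${}^{(X)}\pi^{ij} = 2\,a^{l-3}\dot a\,\sigma^{ij}$. Next I would expand the contraction using $T_{\alpha\beta} = \partial_\alpha\phi\,\partial_\beta\phi - \tfrac{1}{2}\metric_{\alpha\beta}\,\partial^\mu\phi\,\partial_\mu\phi$, splitting it into the ``gradient'' piece ${}^{(X)}\pi^{\alpha\beta}\partial_\alpha\phi\,\partial_\beta\phi$ and the ``trace'' piece $-\tfrac{1}{2}\,{}^{(X)}\pi^{\alpha\beta}\metric_{\alpha\beta}\,\partial^\mu\phi\,\partial_\mu\phi$. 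For the trace factor I would use ${}^{(X)}\pi^{\alpha\beta}\metric_{\alpha\beta} = 2(l+n)\,a^{l-1}\dot a$, which relies on $\sigma^{ij}\sigma_{ij} = n$, and for the gradient norm the identity $\partial^\mu\phi\,\partial_\mu\phi = -(\partial_t\phi)^2 + a^{-2}\sigma^{ij}\partial_i\phi\,\partial_j\phi$.

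Collecting the coefficients of $(\partial_t\phi)^2$ and of $\sigma^{ij}\partial_i\phi\,\partial_j\phi$ separately then produces $(-2l + l + n)\,a^{l-1}\dot a = (n-l)\,a^{l-1}\dot a$ and $(2 - l - n)\,a^{l-3}\dot a = (2-n-l)\,a^{l-3}\dot a$ respectively, which is exactly~\eqref{def_tensor_explicit}. The computation is entirely mechanical, so there is no genuine conceptual obstacle; the only place to be careful is bookkeeping, namely keeping the spatial factors $a^{2}$ straight when raising and lowering indices, correctly counting the trace $\sigma^{ij}\sigma_{ij} = n$, and tracking the cancellations that combine the gradient and trace pieces into the two stated coefficients.
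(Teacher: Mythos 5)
Your computation is correct, and it follows essentially the same route as the paper: compute the deformation tensor ${}^{(X)}\pi = {\mathcal L}_X\metric$ directly (the paper does this via Cartan calculus on $-dt^2 + a^2\sigma_{ij}dx^idx^j$, you via the coordinate formula for the Lie derivative, yielding the identical components $\pi_{00}=-2la^{l-1}\dot a$, $\pi_{0i}=0$, $\pi_{ij}=2a^{l+1}\dot a\,\sigma_{ij}$), then contract with the energy-momentum tensor using $\partial^\mu\phi\,\partial_\mu\phi = -(\partial_t\phi)^2 + a^{-2}\sigma^{ij}\partial_i\phi\,\partial_j\phi$. Whether one raises the indices on $\pi$ (as you do) or on $T$ (as the paper does) is immaterial, and all your intermediate coefficients, including the trace ${}^{(X)}\pi^{\alpha\beta}\metric_{\alpha\beta}=2(l+n)a^{l-1}\dot a$, check out.
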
   
\begin{proof}
To compute the deformation tensor we start by noting that ${\mathcal L}_X \sigma =0$,
\begin{eqnarray*}
 {\mathcal L}_X dt = d \iota_X dt = d(a^l) = l a^{l-1}\dot a dt\;,
\end{eqnarray*}
and that
\begin{eqnarray*}
 {\mathcal L}_X a = a^l\dot a\;,
\end{eqnarray*}
in order to compute
\begin{eqnarray*}
\;^{(X)}\pi &=& {\mathcal L}_X \metric = {\mathcal L}_X(-dt^2 + a^2 \sigma_{ij}dx^idx^j) 
\\
&=& -2 dt {\mathcal L}_X dt +  2 a^{l+1} \dot a \sigma_{ij}dx^idx^j
\\
&=& -2l a^{l-1}\dot a dt^2 +  2 a^{l+1} \dot a \sigma_{ij}dx^idx^j\;.
\end{eqnarray*}
It then follows that 
\begin{eqnarray*}
\;^{(X)}\pi_{\alpha\beta}T^{\alpha\beta} &=& -2 l a^{l-1} \dot a\left( \left(\partial_t \phi\right)^2  + \frac12  \partial_{\alpha}\phi\partial^{\alpha} \phi \right)
\\
&& + 2 a^{l+1} \dot a \sigma_{ij} \left( \partial^i \phi\partial^j \phi -\frac12 g^{ij}  \partial_{\alpha}\phi\partial^{\alpha} \phi \right)\;,
\end{eqnarray*}
and the desired result is then a consequence of the identities
\begin{equation}
 \partial_{\alpha}\phi\partial^{\alpha}\phi = - \left(\partial_t \phi\right)^2 +a^{-2} \sigma^{ij}\partial_i\phi\partial_j\phi\;,
\end{equation}
and 
\begin{equation}
\sigma_{ij}\partial^i\phi\partial^j\phi=a^{-4}\sigma^{ij}\partial_i\phi\partial_j\phi\;.
\end{equation}
\end{proof}

We thus choose $l=2-n$, for which we have   
\begin{equation}
\label{goodSign}
^{(X)}\pi_{\alpha\beta}T^{\alpha\beta}\geq 0\;,
\end{equation}
and define
\begin{equation}
\label{energyDef}
E[\phi](t):= \int_{\{t\}\times\bbR^n}\; ^{(X)}P_\alpha N^\alpha |\metric|^{\frac{1}{2}}dx= \frac12\int_{\mathbb{R}^n}\left[a^{2}(\partial_{t}\phi)^{2}+\sigma^{ij}\partial_{i}\phi\partial_{j}\phi\right](t,x)  |\sigma|^{\frac12}dx\;.
\end{equation}
Consequently, the identities~\eqref{multident} and~\eqref{def_tensor_explicit} give rise to  
\begin{equation}
\label{energyEst0}
E(t_1)\leq E(t_0) + \int_{t_0}^{t_1}\int_{\mathbb{R}^n}
 a^2 \left | \partial_t\phi \, \square_{\metric}\phi \right| \,|\sigma|^{\frac{1}{2}}dxdt\;.
\end{equation}

Introducing  the $L^{2}$ norm  defined by
\[\lp{f}{\Lx}=\left(\int_{\mathbb{R}^{n}}\left|f(x)\right|^2 |\sigma|^{\frac{1}{2}}dx\right)^{\frac12}\;, \]
we can apply the Cauchy-Schwarz inequality followed by Young's inequality to~\eqref{energyEst0} to obtain
\begin{eqnarray*}
E(t)&\leq& E(t_0) + \int_{t_0}^{t_1} \lp{a\, \partial_t \phi}{\Lx} \|a\, \square_{\metric}\phi \|_{\Lx} dt
\\
 &\leq& E(t_0) + \sqrt{2}\sup_{t_0\leq t\leq t_1 } \left\{E^{1/2}(t)\right\} \int_{t_0}^{t_1}  \|a\, \square_{\metric}\phi \|_{\Lx} dt
 \\
&\leq&  E(t_0) + \frac{\epsilon}{2} \sup_{t_0\leq t\leq t_1 } E(t) + \frac{1}{\epsilon} \left(\int_{t_0}^{t_1}  \|a\, \square_{\metric}\phi \|_{\Lx} dt\right)^2\;,
\end{eqnarray*}
where $\epsilon$ can be any positive constant, which when chosen sufficiently small allows us to conclude that  
\begin{eqnarray*}
\sup_{t_0\leq t\leq t_1 } E(t)\leq \frac{1}{1-\frac{\epsilon}{2}} \left(E(t_0) + \frac{1}{\epsilon} \left(\int_{t_0}^{t_1}  \|a\, \square_{\metric}\phi \|_{\Lx} dt\right)^2 \right)\;. 
\end{eqnarray*}
%
From the previous we immediately obtain our main energy estimate:  
\begin{theorem}
 Let $\metric$ be the FLRW metric~\eqref{metricFLRW}. Then, there exists a constant $C_1>0$ such that given $\phi: [t_0,T) \times \bbR^{n}\rightarrow \bbR$, the following energy estimate holds
 \begin{equation} 
 \label{energyEst}
 \sup_{t_0\leq t< T}E^{1/2}(t) \leq C_1\Big(E^{1/2}(t_0)+\int_{t_0}^{T}a(t)\lp{ \square_{\metric}\phi }{\Lx} dt \Big)\;, 
  \end{equation}
for the energy $E=E[\phi]$ defined in~\eqref{energyDef}.
\end{theorem}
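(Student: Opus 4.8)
The plan is to close the vector field (multiplier) argument that the preceding computations have already set up, using the multiplier $X=a^{2-n}\partial_t$ precisely because it renders the bulk term nonnegative as recorded in~\eqref{goodSign}. First I would feed this choice into the multiplier identity~\eqref{multident}: the deformation contribution $\tfrac12\,{}^{(X)}\pi^{\alpha\beta}T_{\alpha\beta}$ is then a nonnegative quantity sitting on the right-hand side, which I simply discard after moving the energy at time $t_1$ to the left. The remaining source term is $X\phi\cdot\square_{\metric}\phi=a^{2-n}\partial_t\phi\,\square_{\metric}\phi$, and since $|\metric|^{1/2}=a^{n/2}|\sigma|^{1/2}$ the weights combine so that the spacetime integrand is exactly $a^2|\partial_t\phi\,\square_{\metric}\phi|\,|\sigma|^{1/2}$; this is the differential inequality~\eqref{energyEst0}.

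The second step is purely algebraic. On the source integral I would split $a^2|\partial_t\phi\,\square_{\metric}\phi|=(a|\partial_t\phi|)(a|\square_{\metric}\phi|)$ and apply Cauchy--Schwarz in the $\Lx$ inner product; by the definition~\eqref{energyDef} the first factor obeys $\|a\,\partial_t\phi\|_{\Lx}\leq\sqrt{2}\,E^{1/2}(t)$. Bounding $E^{1/2}(t)$ by its supremum over $[t_0,t_1]$ and invoking Young's inequality $ab\leq\tfrac{\epsilon}{2}a^2+\tfrac{1}{2\epsilon}b^2$ lets me absorb a small multiple of $\sup E$ into the left-hand side, producing the $\epsilon$-dependent bound already displayed above the theorem. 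From there I would fix a convenient value, say $\epsilon=1$ (so the prefactor $\tfrac{1}{1-\epsilon/2}$ equals $2$), take square roots using $\sqrt{u+v}\leq\sqrt{u}+\sqrt{v}$, and observe that $a=a(t)>0$ factors out of the spatial norm, so $\|a\,\square_{\metric}\phi\|_{\Lx}=a(t)\lp{\square_{\metric}\phi}{\Lx}$. Letting $t_1\uparrow T$ and reading off the constant (here $C_1=\sqrt{2}$) yields~\eqref{energyEst}.

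The only genuinely delicate point is the sign in the first step, which is what forces the weight $l=2-n$: for this value the coefficient $(2-n-l)$ of the spatial-gradient term in~\eqref{def_tensor_explicit} vanishes identically, while the coefficient $(n-l)=2(n-1)$ of the time-derivative term remains nonnegative once $n\geq 2$ and $\dot a>0$, so the bulk integral may be dropped for free. Any larger choice of $l$ would turn the spatial coefficient negative and obstruct discarding the deformation term. Everything downstream is routine: the Cauchy--Schwarz and Young steps are standard, and the passage $t_1\to T$ is justified because the right-hand side is monotone in $t_1$ and the constant is already fixed independently of it.
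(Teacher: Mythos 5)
Your proposal is correct and follows essentially the same route as the paper: the multiplier $X=a^{2-n}\partial_t$, chosen so that the deformation term in~\eqref{def_tensor_explicit} is nonnegative and can be discarded, followed by Cauchy--Schwarz in $\Lx$, Young's inequality to absorb $\sup E$, and a square root to reach~\eqref{energyEst}. The only (inessential) difference is that you fix $\epsilon=1$ while the paper keeps $\epsilon$ a small free parameter; both choices yield an admissible constant $C_1$.
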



\section{local existence}
\label{secLocal} 

Consider two conformally related metrics in $\bbR^{1+n}$
$${\metric}=\Omega^2 \tilde\metric\;.$$
Then a direct computation shows that 
\begin{equation}
\label{waveConformal}
\square_{\metric}\phi = \Omega^{-2}\square_{\tilde\metric}\phi+(n-1)\Omega^{-3}\tilde\metric^{\alpha\beta} \partial_{\alpha}\Omega\partial_{\beta}\phi\;.
\end{equation}

If we let $\metric$ be the FLRW metric~\eqref{metricFLRW} and consider the standard change of time variable
\begin{equation}
\tau=\int_{t_0}^t \frac{1}{a(s)} ds \;,
\end{equation}
then
\begin{equation}
\metric=a^2(\tau)\left( -d\tau^2+\sigma_{ij} dx^idx^j\right)=:a^2 \eta_{\sigma}\;.
\end{equation}
Using~\eqref{waveConformal} with $\Omega=a$ and $\tilde \metric= \eta_{\sigma}$ we conclude that the semilinear equation 
$$\square_{\metric}\phi=F(\phi,\partial \phi)$$ 
is equivalent to 
$$\square_{\eta_{\sigma}}\phi=(n-1)\frac{\partial_{\tau}a}{a}\partial_{\tau}\phi+a^2F(\phi,\partial \phi)\;.$$ 
We can then apply classical local well-posedness results for nonlinear wave equations on perturbations of Minkowski (see for instance~\cite[Theorem 4.1]{sogge} and~\cite[Theorem 6.1 and Theorem 6.6]{lukLectures}, which when translated back to our original setting lead to  

\begin{theorem}
\label{local}
Let $(\mathcal{M},\metric)$ be a FLRW spacetime, with $\mathcal{M}=\mathbb{R}^+\times\mathbb{R}^n$ and smooth metric of the form~\eqref{metricFLRW}. 
Assume moreover that  
\begin{equation}\label{inverse_metric_small}
\sup_{x\in\bbR^n}\sum_{i,j} |\sigma^{ij}-\delta^{ij}|<\frac{1}{10}\;. 
\end{equation}
Then, the initial value problem 
\begin{equation}
\begin{cases}
\square_{\metric}\phi=F(\phi,\partial \phi) \;, \\
\phi(t_0,x)=\phi_0(x) \;\;, \; \partial_t\phi(t_0,x)=\phi_1(x) \;,
\end{cases}
\end{equation}
with $F\in C^{\infty}$, $F(0,0)=0$, and $(\phi_0,\phi_1)\in H^{K+1}(\bbR^n)\times H^K(\bbR^n)$, $K\geq n+1$, admits a unique solution $(\phi,\partial_t\phi)\in L^{\infty}([t_0,T), H^{K+1}(\bbR^n))\times L^{\infty}([t_0,T), H^{K}(\bbR^n)) $, where 
$T=T(\lp{(\phi_0,\phi_1)}{H^{K+1}\times H^K})>0$\;. 
\end{theorem}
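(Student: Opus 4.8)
The plan is to reduce the problem on the FLRW background to a standard local well-posedness problem on a small perturbation of Minkowski space, exploiting the conformal factorization $\metric = a^2\eta_{\sigma}$ recorded just above the statement, and then to invoke the classical theory quoted in \cite[Theorem 4.1]{sogge} and \cite[Theorem 6.1 and Theorem 6.6]{lukLectures}. First I would introduce the conformal time $\tau = \int_{t_0}^t a(s)^{-1}\md s$, which is a smooth, strictly increasing change of variable because $a>0$; since we only seek a short-time solution it suffices to work on a compact $\tau$-interval $[0,\tau_1]$ on which $a(\tau)$, $\partial_{\tau}a(\tau)$ and all their derivatives are smooth and bounded and on which $a$ stays bounded away from zero. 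Under this change the principal part becomes $\square_{\eta_{\sigma}}$, and using~\eqref{waveConformal} with $\Omega=a$ the equation $\square_{\metric}\phi=F(\phi,\partial\phi)$ is equivalent to
$$\square_{\eta_{\sigma}}\phi = (n-1)\frac{\partial_{\tau}a}{a}\partial_{\tau}\phi + a^2 F(\phi,\partial\phi) =: \tilde F(\tau,\phi,\partial\phi)\;,$$
exactly as displayed above the statement. The Cauchy data transform by $\phi(0,\cdot)=\phi_0\in H^{K+1}$ and $\partial_{\tau}\phi(0,\cdot)=a(t_0)\phi_1\in H^{K}$, since $\partial_{\tau}=a\,\partial_t$, so the regularity class is unchanged.

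Next I would verify that the transformed problem falls within the scope of the cited classical results. The metric $\eta_{\sigma}=-d\tau^2+\sigma_{ij}dx^idx^j$ is a time-independent perturbation of Minkowski whose spatial part is close to $\delta_{ij}$: the smallness hypothesis~\eqref{inverse_metric_small} guarantees that the inverse spatial metric $\sigma^{ij}$ stays within $\tfrac{1}{10}$ of $\delta^{ij}$ uniformly, so $\eta_{\sigma}$ is uniformly hyperbolic with principal symbol comparable to the flat one, as required to run the standard energy-estimate/iteration scheme. The new right-hand side $\tilde F$ is a smooth function of its arguments with $\tilde F(\tau,0,0)=0$, whose coefficients $\tfrac{\partial_{\tau}a}{a}$ and $a^2$ are smooth and bounded on the chosen compact $\tau$-interval; hence $\tilde F$ meets the hypotheses ($C^{\infty}$, vanishing at the origin) needed to apply the Moser and commutator estimates underpinning \cite[Theorem 4.1]{sogge} and \cite[Theorem 6.1 and Theorem 6.6]{lukLectures}. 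The threshold $K\geq n+1$ places us comfortably above the energy regularity $s>\tfrac{n}{2}+1$ demanded by that theory.

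Finally I would invoke the classical theorem to obtain, for data of size $\lp{(\phi_0,a(t_0)\phi_1)}{H^{K+1}\times H^{K}}$, a unique solution $(\phi,\partial_{\tau}\phi)\in L^{\infty}([0,\tau_1),H^{K+1})\times L^{\infty}([0,\tau_1),H^{K})$ on a $\tau$-interval whose length depends only on the data norm, and then translate back via the diffeomorphism $\tau\mapsto t$; since $\partial_t=a^{-1}\partial_{\tau}$ and $a$ is bounded above and below on the corresponding compact $t$-interval, the solution lies in $L^{\infty}([t_0,T),H^{K+1})\times L^{\infty}([t_0,T),H^{K})$ with $T=T(\lp{(\phi_0,\phi_1)}{H^{K+1}\times H^{K}})>0$, as claimed. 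The main obstacle I anticipate is essentially \emph{bookkeeping}: ensuring that the $\tau$-dependent coefficients and the reparametrization do not degrade the function-space statement or the dependence of the existence time on the data, i.e.\ checking that the cited theorems apply to a smooth-coefficient, $\tau$-dependent (but compactly supported in $\tau$) perturbation of Minkowski rather than to the exactly flat model. No genuinely new analytic input beyond the classical local theory is required.
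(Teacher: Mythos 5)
Your proposal is correct and follows essentially the same route as the paper: the conformal change of time variable $\tau=\int_{t_0}^t a(s)^{-1}ds$, the factorization $\metric=a^2\eta_{\sigma}$ together with~\eqref{waveConformal} to rewrite the equation as $\square_{\eta_{\sigma}}\phi=(n-1)\frac{\partial_{\tau}a}{a}\partial_{\tau}\phi+a^2F(\phi,\partial\phi)$, and then an appeal to the classical local well-posedness theory for nonlinear wave equations on perturbations of Minkowski from~\cite{sogge} and~\cite{lukLectures}, translated back to the original time coordinate. Your additional bookkeeping (boundedness of the coefficients on a compact $\tau$-interval, the transformation of the data via $\partial_{\tau}=a\,\partial_t$, and the regularity threshold $K\geq n+1>\frac{n}{2}+1$) is exactly the verification the paper leaves implicit.
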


\section{Global Existence for small data}
\label{secGlobal} 

In this section we establish small data global existence for the system~\eqref{systemMain0} under the conditions of Theorem~\ref{local}. To do that recall that $C_0$ denotes the size of the initial data~\eqref{smallInitial data}. We
start by defining
\begin{equation*}
\mathcal{E}_{K}(t):=\sum_{1\leq A\leq d}\sum_{k=0}^K E^{1/2}[\partial_x^{k}\phi^A](t)
=\sum_{1\leq A\leq d}\frac{1}{\sqrt{2}}\sum_{k=0}^K\left(\int_{\{t\}\times \bbR^n} a^2 (\partial_t\partial_x^{k}\phi^A)^2 + \sigma^{ij} \partial_i\partial_x^{k}\phi^A\partial_j\partial_x^k\phi^A
\right)^{\frac12} \;.
\end{equation*}

Some comments concerning notation are in order: first we are using $\partial_x^k$ to denote any differentiation of order $k$ with respect to the spatial variables $x^i$, i.e., any differentiation of the form 
$$\frac{\partial^{k_1}}{\partial {x^{i_1}}}\cdots \frac{\partial^{k_l}}{\partial {x^{i_l}}}\;,$$ 
 with $\sum k_i = k$; secondly, we are omitting the volume form $|\sigma|^{\frac12}dx$ in order to not overburden the notion; this is clearly not an issue since  by the smallness condition on the metric we have $|\sigma|\sim 1$, uniformly on $x$, which allows us to drop this coefficient from all spatial $L^2$ based norms.  On this note it might be helpful to make it clear that by choosing $C_{\sigma}$ sufficiently small, there exits $C>0$ such that  
 \begin{equation}
 \label{simMetrics}
 C^{-1}\delta^{ij} \xi_i\xi_j \leq \sigma^{ij} \xi_i\xi_j \leq C \delta^{ij} \xi_i\xi_j\;, 
 \end{equation}
 %
for all $\xi=(\xi_i)\in\bbR^n\;.$

We proceed by a continuity argument. Let $K\geq n+1$ and
 assume that $M$ is a large enough constant so that
\begin{equation}\label{initial_cond_boot}
	\mathcal{E}_K(t_0)\leq \frac{MC_0}{C_2} ,
\end{equation}
where $C_2$ is a constant, independent of $C_0$ and $C_{\sigma}$, that will be specified in the course of the proof. Next we assume as  bootstrap condition that  $T$ is the supremum over all times of existence $t\geq t_0$ for which  
\begin{equation}
\label{bootAss}
\sup_{t_0\leq t<T}\mathcal{E}_{K}(t) \leq 4MC_0\;.
\end{equation}
That a $T>t_0$ in such conditions exists is a direct consequence of Theorem~\ref{local}.

Using Sobolev embedding, the relation~\eqref{simMetrics}, and the bootstrap assumption  we see that for $0\leq l < K -\frac{n}{2}$, and $0\leq t<T$, we have
\begin{equation}
\label{dxBound}
\lp{\partial_x^l\partial_x\phi^A(t,\,\cdot\,)}{L^{\infty}(\bbR^n)} \leq C \lp{\partial_x\phi^A(t,\,\cdot\,)}{H^K(\bbR^n)}\leq C \mathcal{E}_{K}(t) \leq  4MCC_0\; , 
\end{equation}
for all $A\in \{1,2,...,d\}$. Moreover, for  $0\leq l < K -\frac{n}{2}$, and $0\leq t<T$, we get  

\begin{equation}
\label{dtBound}
\lp{\partial_x^l\partial_t\phi^A(t,\,\cdot\,)}{L^{\infty}(\bbR^n)} \leq C \lp{\partial_t\phi^A(t,\,\cdot\,)}{H^K(\bbR^n)}\leq C a^{-1}(t) \mathcal{E}_{K} (t)\leq 4MCC_0\,a^{-1}(t)\; ,
\end{equation}
again for all $A\in \{1,2,...,d\}$. Let $0\leq k\leq K$. The spatial derivatives satisfy the equation
\begin{eqnarray*}
\square_{\metric}\partial_x^k\phi^A&=&\partial_x^k\square_{\metric}\phi^A + [\square_{\metric},\partial_x^k] \phi^A\\ 
&=& \partial_x^k \left(\tilde \nN^{A,\alpha \beta}_{BC}(\phi)\partial_{\alpha} \phi^B \partial_{\beta} \phi^C \right) +[\square_{\metric},\partial_x^k]\phi^A\;,
\end{eqnarray*}
so that applying the energy estimate~\eqref{energyEst} for $A\in\{1,2,...,d\}$, summing and taking the supremum gives 
 \begin{eqnarray} 
 \label{energyEst2}
 \sup_{t_0\leq t<T}\mathcal{E}_{K}(t) 
 &\leq& C
  \mathcal{E}_{K}(t_0) +
 C \sum_{A=1}^d\sum_{k=0}^K
 \left\{
 \int_{t_0}^{T}a(t)\lp{ \partial_x^k \left(\tilde \nN^{A,\alpha \beta}_{BC}(\phi)\partial_{\alpha} \phi^B \partial_{\beta} \phi^C \right)  }{\Lx} dt \,+
 \right.
 \\
 \nonumber
 &&+
\left. \int_{t_0}^{T}a(t)\lp{[\square_{\metric},\partial_x^k]\phi^A }{\Lx} dt \right\}  \;.
  \end{eqnarray}

Let us first concentrate on the second term on the right hand side of the previous energy estimate. Using the form of the non-linearities~\eqref{nonLinForm}  we get
 \begin{equation} 
 \label{Nterms}
\lp{ \partial_x^k \left(\tilde \nN^{A,\alpha \beta}_{BC}(\phi)\partial_{\alpha} \phi^B \partial_{\beta} \phi^C \right)  }{\Lx} 
\leq C
\left( a(t) \right)^{\delta_{0\alpha}+\delta_{0\beta}-2} \lp{ \partial_x^k \left(\nN^{A,\alpha \beta}_{BC}(\phi)\partial_{\alpha} \phi^B \partial_{\beta} \phi^C \right)  }{L^2(\bbR^n)}\;. 
 \end{equation}

To simplify notation we will use $\nN$ to collectively denote all the functions $\nN^{A,\alpha \beta}_{BC}$. That being said, we note that modulo some multiplicative positive constants arising from the application of Leibniz rule, the terms 
$\lp{ \partial_x^k \left(\nN(\phi)\partial_{\alpha} \phi^B \partial_{\beta} \phi^C \right)  }{L^2(\bbR^n)}$ can be bounded by sums of terms of the form 
$$\lp{ \partial_x^{k_1} \nN(\phi)\partial_x^{k_2} \left(\partial_{\alpha} \phi^B \partial_{\beta} \phi^C \right)  }{L^2(\bbR^n)}\;,$$
with $k_1,k_2\geq0$ and $k_1+k_2=k$. 
In particular, either $k_1\leq k/2$ or $k_2\leq k/2$. 

Let us start with the case $k_1\leq k/2$. Recall that, by assumption, 
 \begin{equation} 
 \label{NBound}
 \lp{\partial_{\phi}^l \nN}{L^{\infty}(\bbR^d)}\leq C\;,
  \end{equation}
 for all $0\leq l \leq K$. This gives the necessary control $\lp{ \nN(\phi(t,\,\cdot\,))}{L^{\infty}(\bbR^n)}\leq C$ needed in the case  $k_1=0$. If $k_1\geq 1$, the chain rule implies that   $\lp{\partial_x^{k_1} (\nN(\phi))}{L^{\infty}(\bbR^n)}$ is bounded by sums of terms of the form 
$$O(1)\Pi_{s_i} \lp{\partial_{x}^{s_i}\phi^A}{L^{\infty}(\bbR^n)}\;,$$
with $\sum s_i=k_1$, and $s_i\geq 1$. In view of~\eqref{dxBound} we see that these terms are bounded, since we have $0\leq l=s_i-1\leq k_1-1\leq k/2-1\leq K/2-1<K-n/2$, where the last inequality follows from the fact that $K\geq n+1>n-2$. Consequently 
 \begin{equation} 
 \label{Nterms1}
 \lp{ \partial_x^{k_1} \nN(\phi)\partial_x^{k_2} \left(\partial_{\alpha} \phi^B \partial_{\beta} \phi^C \right)  }{L^2(\bbR^n)}
 \leq C
  \lp{ \partial_x^{k_2} \left(\partial_{\alpha} \phi^B \partial_{\beta} \phi^C \right)  }{L^2(\bbR^n)}\;.
   \end{equation}
But now the right hand side is bounded by sums of terms of form 
$$\lp{ \partial_x^{\tilde k_1} \partial_{\alpha} \phi^B  \partial_x^{\tilde k_2}\partial_{\beta} \phi^C  }{L^2(\bbR^n)}\;,$$
with $\tilde k_1+\tilde k_2=k_2\leq k$. We may then assume without loss of generality that $\tilde k_1\leq k/2$. Since $K\geq n+1$ we have $0\leq \tilde k_1\leq k/2\leq K/2<K -n/2$ and therefore we are allowed to use either~\eqref{dxBound}, if $\alpha\neq 0$, or~\eqref{dtBound}, if $\alpha = 0$, to conclude that 
 \begin{equation} 
\lp{ \partial_x^{\tilde k_1} \partial_{\alpha} \phi^B }{L^{\infty}(\bbR^n)}
\leq \left( a(t) \right)^{-\delta_{0\alpha}}4MCC_0\;. 
\end{equation}
But then we get 
 \begin{eqnarray} 
  \label{Nterms2}
\lp{ \partial_x^{\tilde k_1} \partial_{\alpha} \phi^B  \partial_x^{\tilde k_2}\partial_{\beta} \phi^C }{L^2(\bbR^n)}
&\leq&
 \lp{ \partial_x^{\tilde k_1} \partial_{\alpha} \phi^B }{L^{\infty}(\bbR^n)}
\lp{ \partial_x^{\tilde k_2}\partial_{\beta} \phi^C  }{L^2(\bbR^n)}
\\
\nonumber
&\leq& \left( a(t) \right)^{-\delta_{0\alpha}}4MCC_0 \left( a(t) \right)^{-\delta_{0\beta}} \mathcal{E}_K(t)\;. 
\end{eqnarray}
Using~\eqref{Nterms1} and~\eqref{Nterms2} we finally establish 
\begin{equation}
\label{mainEstks}
\lp{ \partial_x^{k_1} \nN(\phi)\partial_x^{k_2} \left(\partial_{\alpha} \phi^B \partial_{\beta} \phi^C \right)  }{L^2(\bbR^n)}\leq \left( a(t) \right)^{-\delta_{0\alpha}-\delta_{0\beta}}4MCC_0 \mathcal{E}_{K}(t) \;,
\end{equation}
 provided $k_1\leq k/2$. 
 
Let us now consider the case $k_1>k/2$. In such case $k_2< k/2$ and therefore we see that  $$\lp{ \partial_x^{k_2} \left(\partial_{\alpha} \phi^B \partial_{\beta} \phi^C \right)}{L^{\infty}(\bbR^n)}$$ is bounded by terms of the form 
 \begin{equation} 
 \label{dNinfty}
 \lp{\partial_x^{\hat k_1} \partial_{\alpha}\phi^B }{L^{\infty}(\bbR^n)}  \lp{\partial_x^{\hat k_2}\partial_{\beta} \phi^C }{L^{\infty}(\bbR^n)}
  \leq \left( a(t) \right)^{-\delta_{0\alpha}-\delta_{0\beta}}4MCC_0 \mathcal{E}_{K}(t)\;,
\end{equation}
where the last estimate is a consequence of~\eqref{dxBound} and~\eqref{dtBound} and the fact that $\hat k_1+\hat k_2= k_2<k/2\leq K/2$ and $K\geq n+1$.

Next we need to control 
 \begin{equation} 
 \lp{\partial_x^{ k_1} \nN(\phi) }{L^2(\bbR^n)} \;\;,\; k_1>k/2\;.
 \end{equation}
Applying the chain rule and using~\eqref{NBound}, we see that $|\partial_x^{ k_1} \nN(\phi)|$ is controlled by sums of terms of the form 
$$|O(1)\Pi_{s_i} \partial_{x}^{s_i}\phi^A|\;\;,\; \text{ with } \sum s_i=k_1\;.$$
By an appropriate relabeling, set $s_1=\max\{s_i\}$ so that $s_i\leq k_1/2\leq k/2$, for all $i\neq 1$, which, in view of~\eqref{dxBound},  implies  
$$ \lp{\partial_{x}^{s_i}\phi^A}{L^{\infty}(\bbR^n)}\leq 4MCC_0\;\;,\; \text{ for all } i\neq 1\;.$$
Consequently, using the bootstrap assumption,
$$\lp{\Pi_{s_i} \partial_{x}^{s_i}\phi^A}{{L^2(\bbR^n)}}\leq 4MCC_0 \lp{ \partial_{x}^{s_1}\phi^A}{{L^2(\bbR^n)}}\leq (4MCC_0)^2\;,$$
from which, by decreasing $C_0$ if necessary, we can establish 
 \begin{equation} 
 \lp{\partial_x^{ k_1} \nN(\phi) }{L^2(\bbR^n)} \leq 1\;.
 \end{equation}
The last estimate together with~\eqref{dNinfty} then allows us to conclude that~\eqref{mainEstks} also holds in the case $k_1>k/2$.  So, for any $0\leq k\leq K$, using~\eqref{Nterms} leads to 
 \begin{equation} 
 \label{NtermsFinal}
\lp{ \partial_x^k \left(\tilde \nN^{A,\alpha \beta}_{BC}(\phi)\partial_{\alpha} \phi^B \partial_{\beta} \phi^C \right)  }{\Lx} 
\leq
a^{\delta_{0\alpha}+\delta_{0\beta}-2}a^{-\delta_{0\alpha}-\delta_{0\beta}}4MCC_0 \mathcal{E}_{K}(t)=a^{-2}4MCC_0 \mathcal{E}_{K}(t)\;, 
 \end{equation}
and 
 \begin{equation} 
 \label{nonlinEst}
\int_{t_0}^{T}a(t)\lp{ \partial_x^k \left(\tilde \nN^{A,\alpha \beta}_{BC}(\phi)\partial_{\alpha} \phi^B \partial_{\beta} \phi^C \right)  }{\Lx} dt \leq 4MC C_0 \int_{t_0}^{T}\frac{\mathcal{E}_{K}(t)}{a(t)} dt \;.
 \end{equation}

 We now consider the last term in~\eqref{energyEst2}. In the case $k=K$, the commutator  $[\square_{\metric},\partial_x^k]$ is the difference of two differential operators of order $K+2$ and this is worrisome since, a priori, our bootstrap assumption only gives control of derivatives up to order $K+1$! 
 But it is well known  that the top derivatives cancel out (see for instance ~\cite[Section 6.2]{alinhac}  or~\cite{m_lec}).
 For the sake of completeness, we show here that $[\square_{\metric},\partial_x^k]$ is in fact of order $k+1$, and that moreover enough factors involving the spatial metric $\sigma$ appear and provide, via the $K$th order near flatness condition~\eqref{smallSigmaK}, a small parameter $C_{\sigma}$ that will allow us to close our bootstrap argument.   

Using~\eqref{flrw_eq} we see that 
\begin{equation}
[\square_{\metric},\partial_x^k] = a^{-2} [\Delta_{\sigma},\partial_x^k]\;.
\end{equation}
 Then, if we note that Leibniz rule can be written as 
\begin{equation}
\partial_x^k(fg)=\sum_{k_1+k_2=k} c_{k_1,k_2}\partial_x^{k_1}f\partial_x^{k_2}g\;, 
\end{equation}
with the $c_{k_1,k_2}$ positive constant such that $c_{0,k}=1$, we can use~\eqref{DeltaInComp} to compute
\begin{eqnarray*}
[\Delta_{\sigma},\partial_x^k]\phi
&=&
\Delta_{\sigma}\partial_x^k\phi -\partial_x^k\left[ \frac{1}{\sqrt{|\sigma|}}\partial_{i}\left(\sigma^{ij} \sqrt{|\sigma |} \partial_{j}\phi\right ) \right]\;,
\\
&=&
\Delta_{\sigma}\partial_x^k\phi-\sum_{k_1+k_2=k}c_{k_1,k_2}\partial_x^{k_1} \left( |\sigma|^{-1/2} \right)\partial_x^{k_2}\left(\sigma^{ij} \sqrt{|\sigma |} \partial_{j}\phi\right )
\\
&=&
\Delta_{\sigma}\partial_x^k\phi-c_{0k}|\sigma|^{-1/2} \partial_i\sum_{k_1+k_2=k} c_{k_1,k_2}\partial_x^{k_1}\left(\sigma^{ij} \sqrt{|\sigma |}\right ) \partial_x^{k_2}\partial_{j}\phi
\\
&& -\sum_{k_1+k_2=k \;,\; k_1\neq 0}c_{k_1,k_2}\partial_x^{k_1} \left( |\sigma|^{-1/2} \right)\partial_x^{k_2}\left(\sigma^{ij} \sqrt{|\sigma |} \partial_{j}\phi\right )
\\
&=&
\Delta_{\sigma}\partial_x^k\phi-|\sigma|^{-1/2} \partial_i \left(c_{0k}\sigma^{ij} \sqrt{|\sigma |} \partial_{j}\partial_x^{k}\phi\right )
\\
&&
-|\sigma|^{-1/2} \partial_i\sum_{k_1+k_2=k\;,\; k_1\neq 0} c_{k_1,k_2}\partial_x^{k_1}\left(\sigma^{ij} \sqrt{|\sigma |}\right ) \partial_x^{k_2}\partial_{j}\phi
\\
&& -\sum_{k_1+k_2=k \;,\; k_1\neq 0}c_{k_1,k_2}\partial_x^{k_1} \left( |\sigma|^{-1/2} \right)\partial_i\sum_{\tilde k_1+\tilde k_2=k_2} c_{\tilde k_1,\tilde k_2}\partial_x^{\tilde k_1}\left(\sigma^{ij} \sqrt{|\sigma |}\right ) \partial_x^{\tilde k_2}\partial_{j}\phi\;,
\end{eqnarray*}
since the two terms in the first line of the last equality cancel out, we finally arrive at
\begin{eqnarray}
\nonumber
-a^{2}[\square_{\metric},\partial_x^k] \phi^A &=& |\sigma|^{-1/2} \partial_i\sum_{k_1+k_2=k\;,\; k_1\neq 0} c_{k_1,k_2}\partial_x^{k_1}\left(\sigma^{ij} \sqrt{|\sigma |}\right ) \partial_x^{k_2}\partial_{j}\phi^A
\\
\label{commutator}
&&+\sum_{k_1+k_2=k \;,\; k_1\neq 0}c_{k_1,k_2}\partial_x^{k_1} \left( |\sigma|^{-1/2} \right)\partial_i\sum_{\tilde k_1+\tilde k_2=k_2} c_{\tilde k_1,\tilde k_2}\partial_x^{\tilde k_1}\left(\sigma^{ij} \sqrt{|\sigma |}\right ) \partial_x^{\tilde k_2}\partial_{j}\phi^A\;.
\end{eqnarray}

We can now use Jacobi's formula to write, given $l\in\bbR$, 
\begin{equation}
\partial_x |\sigma|^{l} = l |\sigma|^{l} \sigma^{ij}\partial_x\sigma_{ij}\;.
\end{equation}
 Recall the well known fact that
\begin{equation}
\partial_x \sigma^{ij} = -\sigma^{kj}\sigma^{is}\partial_x\sigma_{sk}\;.
\end{equation}
 By direct inspection of~\eqref{commutator} we see that: i)  all terms on the right hand side contain at least one factor involving derivatives of $|\sigma|^l$ or of $\sigma^{ij}$ which, according to the previous identities and~\eqref{smallSigmaK} can be bounded, in $L^{\infty}(\bbR^n)$, by $CC_{\sigma}$; ii) all terms contain derivatives $\partial_x^{k}\partial_{j}\phi^A$, with $0\leq k\leq K$, all of which can be bounded, in $L^{2}(\bbR^n)$, by the energy $\mathcal{E}_{K}$; 
  iii) Finally, there are also terms involving factors of $|\sigma|^l$ or $\sigma^{ij}$ which are clearly bounded, in $L^{\infty}(\bbR^n)$. We thus conclude that 
\begin{equation}
\lp{[\square_{\metric},\partial_x^k]\phi^A(t,\,\cdot\,)}{L^2(\bbR^n)}\leq C  a^{-2}(t) C_{\sigma}\mathcal{E}_{K}(t)\;,
\end{equation}
 and consequently 
 \begin{equation} 
 \label{commEst}
\int_{t_0}^{T}a(t)\lp{[\square_{\metric},\partial_x^k]\phi^A }{\Lx} dt \leq CC_{\sigma} \int_{t_0}^{T}\frac{\mathcal{E}_{K}(t)}{a(t)} dt\;.
 \end{equation}

We are now ready to close our bootstrap argument. From estimates ~\eqref{energyEst2},~\eqref{nonlinEst} and~\eqref{commEst}, it follows that  
  \begin{eqnarray*} 
\mathcal{E}_{K}(t)
 & \leq& 
\sup_{t_0\leq t<T}\mathcal{E}_{K}(t) 
\\
&\leq &
C\mathcal{E}_{K}(t_0)
 +d(K+1)4MCC_0 \int_{t_0}^{T}\frac{\mathcal{E}_{K}(t)}{a(t)} dt+d(K+1)CC_\sigma \int_{t_0}^{T}\frac{\mathcal{E}_{K}(t)}{a(t)} dt
 \\
&\leq& C_2 \left(\mathcal{E}_{K}(t_0)
 +C_0 \int_{t_0}^{T}\frac{\mathcal{E}_{K}(t)}{a(t)} dt+C_\sigma \int_{t_0}^{T}\frac{\mathcal{E}_{K}(t)}{a(t)} dt\right)  \;.
   \end{eqnarray*} 
Using Gr\"onwall's inequality leads to
     \begin{equation} 
 \label{energyEst4}
\mathcal{E}_{K}(t)  \leq  C_2 \mathcal{E}_{K}(t_0) \exp \left( C_2(C_0+C_\sigma) \int_{t_0}^{T}\frac{1}{a(t)} dt \right) \;.
  \end{equation}
By the integrability condition \eqref{intCond} and
choosing $C_0+C_\sigma$ sufficiently small, we can ensure 
\[\exp \left( C_2(C_0+C_\sigma) \int_{t_0}^{T}\frac{1}{a(t)} dt \right) < 2 .\]
Applying this, \eqref{initial_cond_boot}, and taking supremum in~\eqref{energyEst4} then yields 
 \begin{equation} 
 \label{energyEst5}
  \sup_{t_0\leq t<T}\mathcal{E}_{K}(t) \leq 2MC_0\;,
  \end{equation}
which corresponds to a strict improvement of the bootstrap assumption~\eqref{bootAss}, from which we can conclude that $T=+\infty$.

\section{Sharp decay estimates}
\label{secDecay}
We will establish sharp  decay upper bounds for the global solutions constructed in the previous section. 

Dropping the capital latin indices to simplify notation, we see that our wave equation can be written in the form 
\bea
\label{waveEqDecay}
\partial_t(a^n\partial_t \phi)=a^{n-2}\left({\Delta_{\sigma}}\phi- a^2\tilde\nN^{\alpha \beta}\partial_{\alpha} \phi \partial_{\beta} \phi\right)\;.
\eea
From~\eqref{smallSigmaK} and~\eqref{dxBound} we see that 
\bea
\nonumber
\lp{\Delta_{\sigma}\phi}{L^{\infty}}
&\leq& 
\lp{|\sigma|^{-1/2}\partial_i\left(\sigma^{ij}\sqrt{|\sigma|}\right)\partial_j\phi}{L^{\infty}} 
+\lp{\sigma^{ij}\partial_i\partial_j\phi}{L^{\infty}} 
\\
\label{deltaControl}
&\lesssim& C_{\sigma}C_0\;,
\eea
while~\eqref{NtermsFinal},~\eqref{energyEst5} and Sobolev embedding imply 
\bea
\lp{a^2\tilde\nN^{\alpha \beta}\partial_{\alpha} \phi \partial_{\beta} \phi}{L^{\infty}}\lesssim C_0^2\;.
\eea 
Then, integrating~\eqref{waveEqDecay}, we conclude 
\bea 
a^n(t)\partial_t\phi (t,x)  = a^n(t_0)\partial_t\phi (t_0,x) +\int_{t_0}^t a^{n-2}(s)\left({\Delta_{\sigma}}\phi- a^2\tilde\nN^{\alpha \beta}\partial_{\alpha} \phi \partial_{\beta} \phi\right) (s,x) ds \; ,
\eea
from which, in view of the previous estimates, it follows that 
\bea 
\label{dtEst}
|a^n(t)\partial_t\phi (t,x)|  \lesssim  C_0 + C_0 \int_{t_0}^t  a^{n-2}(s) ds \; ,
\eea
therefore
\bea 
\nonumber
|\partial_t\phi (t,x)|  &\lesssim& C_0 \left( 1 +  \int_{t_0}^t  a^{n-2}(s) ds \right) a^{-n}(t) 
\\
\label{dtEst2}
&\lesssim&  C_0 \left( \int_{t_0}^t  a^{n-2}(s) ds \right) a^{-n}(t) 
\;.
\eea

Note that if we consider the de Sitter case $a(t)=e^{Ht}$ we immediately obtain (for $n\geq 2$)
\bea 
|\partial_t\phi (t,x)|  \lesssim C_0 e^{-2Ht} \;, 
\eea
while in the case $a(t)=t^p$, $p>1$, we have 
\bea 
|\partial_t\phi (t,x)|  \lesssim C_0 t^{-2p+1} \;. 
\eea

If we commute the wave equation with spatial derivatives we obtain 
\bea
\label{waveEqDecayDx}
\partial_t(a^n\partial_t \partial^k_x \phi)=a^{n-2}\left(\partial^k_x{\Delta_{\sigma}}\phi- a^2\partial^k_x\tilde\nN^{\alpha \beta}\partial_{\alpha} \phi \partial_{\beta} \phi\right)\;.
\eea
Relying once again on~\eqref{dxBound} and ~\eqref{smallSigmaK}, a simple adaptation of~\eqref{deltaControl} shows that, for $k<K-\frac{n}{2}$, 
$$
\lp{\partial^k_x\Delta_{\sigma}\phi}{L^{\infty}}\lesssim C_{\sigma}C_0\;.
$$
while~\eqref{NtermsFinal} and Sobolev embedding give us, also  for $k<K-\frac{n}{2}$,
$$
\lp{a^2\partial^k_x\tilde\nN^{\alpha \beta}\partial_{\alpha} \phi \partial_{\beta} \phi}{L^{\infty}}\lesssim C_0\;. 
$$
So, by integrating~\eqref{waveEqDecayDx} we conclude that 
\bea 
\label{intaEst}
|\partial_t\partial_x^k\phi (t,x)|  \lesssim C_0 \left( \int_{t_0}^t  a^{n-2}(s) ds \right) a^{-n}(t) \;, 
\eea
provided that $k<K-\frac{n}{2}$. \\

Next we focus on constructing the limiting function $\phi_{\infty}$ in the case of a general expanding factor. To do that we start by noticing that since $a(t)$ is positive and increasing we have 
\bea
\label{intCond2}
\int_{t}^{\infty} \left( \int_{t_0}^s
a^{n-2}(u) du \right) a^{-n}(s)ds\leq \int_{t}^{\infty}a^{-1}(s)ds\int_{t_0}^{\infty}a^{-1}(u)du
\leq C \int_{t}^{\infty}a^{-1}(s)ds\;.
\label{non_sharp}
\eea
We can then set
\bea 
\phi_{\infty}(x) := \phi(t_0,x)+\lim_{t\rightarrow \infty} \int_{t_0} ^t  \partial_t\phi(s,x) ds \;,
\eea
which is well defined in view of~\eqref{dtEst2},~\eqref{intCond2} and the fact that $a^{-1}$ is integrable.

Then
%
\bea 
|\phi(t,x)-\phi_{\infty}(x)|\leq \int_{t}^{\infty} |\partial_t\phi(s,x)|  ds \leq C \int_{t}^{\infty}a^{-1}(s)ds \;, \label{sharp_bound_gen}
\eea
and 
\bea
	\lp{\phi(t,\,\cdot\,)-\phi_{\infty}}{L^{\infty}}\lesssim C_0 \int_t^{\infty}a^{-1}(s)ds\rightarrow 0 \;,
\eea
as $t\rightarrow \infty$.
In particular, this gives
\bea 
\lp{\phi(t_0,\,\cdot\,)-\phi_{\infty}}{L^{\infty}}\lesssim C_0\;.
\eea

Similarly, we can define  
\bea 
\phi_{x,\infty}(x) := \partial_x\phi(t_0,x)+\lim_{t\rightarrow \infty} \int_{t_0} ^t  \partial_t\partial_x\phi(s,x) ds \;,
\eea
and obtain
\bea 
|\partial_t\phi(t,x)-\phi_{x,\infty}(x)|\leq \int_{t_0}^t |\partial_t\partial_x\phi(s,x)|  ds \lesssim C_0 \int_t^{\infty}a^{-1}(s)ds\rightarrow 0 \; , 
\eea
as $t\rightarrow \infty$. From this uniform convergence and the already established convergence of $\phi(\,\cdot\,,x)$, as $t\rightarrow \infty$, we conclude that 
$$\phi_{x,\infty}=\partial_x\phi_{\infty}\;.$$

It is now easy to conclude by induction that, for all $k<K-\frac{n}{2}$, we have 
\bea 
\lp{\partial_x^k\left(\phi(t,\,\cdot\,)-\phi_{\infty}\right)}{L^{\infty}}\lesssim C_0 \int_t^{\infty}a^{-1}(s)ds\rightarrow 0 \;, 
\eea
as $t\rightarrow \infty$. \\

The quantitative decay estimates~\eqref{decay2function1} and~\eqref{decay2function1}, which are specific to the power law case $a(t)=t^p$, $p>1$, and the de Sitter case $a(t)=e^{Ht}$, $H>0$, respectively, now follow easily, by using the corresponding expansion factors in the previous procedure.

\section*{Acknowledgements}
This work was partially supported by FCT/Portugal through CAMGSD, IST-ID ,
projects UIDB/04459/2020 and UIDP/04459/2020,  by FCT/Portugal and CERN through project CERN/FIS-PAR/0023/2019 and through the FCT fellowship CEECIND/00936/2018 (A.F.).

\end{document}